\pgfplotsset{compat=1.12}
\DeclarePairedDelimiter{\ceil}{\lceil}{\rceil}
\DeclarePairedDelimiter{\floor}{\lfloor}{\rfloor}
\theoremstyle{definition}
\newtheorem{theorem}{Theorem}
\newtheorem{lemma}{Lemma}
\newtheorem{claim}{Claim}
\newtheorem{proposition}{Proposition}
\newtheorem{corollary}{Corollary}
\newtheorem{example}{Example}
\newtheorem{remark}{Remark}
\newtheorem{definition}{Definition}
\def\BibTeX{{\rm B\kern-.05em{\sc i\kern-.025em b}\kern-.08em
    T\kern-.1667em\lower.7ex\hbox{E}\kern-.125emX}}
\renewcommand*\env@matrix[1][*\c@MaxMatrixCols c]{%
  \hskip -\arraycolsep
  \let\@ifnextchar\new@ifnextchar
  \array{#1}}
\newcommand{\calB}{\mathcal{B}}
\newcommand{\calC}{\mathcal{C}}
\newcommand{\calA}{\mathcal{A}}
\newcommand{\calW}{\mathcal{W}}
\newcommand{\calG}{\mathcal{G}}
\newcommand{\calH}{\mathcal{H}}
\newcommand{\calI}{\mathcal{I}}
\newcommand{\calN}{\mathcal{N}}
\newcommand{\calO}{\mathcal{O}}
\newcommand{\bfr}{\mathbf{r}}
\newcommand{\bfH}{\mathbf{H}}
\newcommand{\bfA}{\mathbf{A}}
\newcommand{\bfS}{\mathbf{S}}
\newcommand{\bfx}{\mathbf{x}}
\newcommand{\bfB}{\mathbf{B}}
\newcommand{\bfv}{\mathbf{v}}
\newcommand{\bfe}{\mathbf{e}}
\begin{document}

\title{Preserving Sparsity and Privacy in Straggler-Resilient Distributed Matrix Computations}

\definecolor{mygr}{rgb}{0.6,0.4,0.0}
\definecolor{my1color}{rgb}{0.6,0.4,0.0}
\definecolor{mycolor1}{rgb}{0.00000,0.44700,0.74100}%
\definecolor{mycolor2}{rgb}{0.85000,0.32500,0.09800}%
\definecolor{mycolor3}{rgb}{0.45000,0.62500,0.19800}%
\tikzset{
block/.style    = {draw, thick, rectangle, minimum height = 2em, minimum width = 2em},
sum/.style      = {draw, circle, node distance = 1cm},
sum1/.style      = {draw, circle, minimum size = 1.1 cm},
input/.style    = {coordinate},
output/.style   = {coordinate},
}

\author{\IEEEauthorblockN{Anindya Bijoy Das\IEEEauthorrefmark{1}, Aditya Ramamoorthy\IEEEauthorrefmark{2}, David J. Love\IEEEauthorrefmark{1}, Christopher G. Brinton\IEEEauthorrefmark{1}} 
\IEEEauthorblockA{\IEEEauthorrefmark{1}School of Electrical and Computer Engineering, Purdue University, West Lafayette, IN 47907 USA\\
\IEEEauthorrefmark{2}Department of Electrical and Computer Engineering, Iowa State University, Ames, IA 50010 USA\\
\texttt{das207@purdue.edu, adityar@iastate.edu, djlove@purdue.edu, cgb@purdue.edu}
}

}

\IEEEtitleabstractindextext{%

\begin{abstract}

Existing approaches to distributed matrix computations involve allocating coded combinations of submatrices to worker nodes, to build resilience to stragglers and/or enhance privacy. In this study, we consider the challenge of preserving input sparsity in such approaches to retain the associated computational efficiency enhancements. First, we find a lower bound on the weight of coding, i.e., the number of submatrices to be combined to obtain coded submatrices to provide the resilience to the maximum possible number of stragglers (for given number of nodes and their storage constraints). Next we propose a distributed matrix computation scheme which meets this exact lower bound on the weight of the coding. Further, we develop controllable trade-off between worker computation time and the privacy constraint for sparse input matrices in settings where the worker nodes are honest but curious. Numerical experiments conducted in Amazon Web Services (AWS) validate our assertions regarding straggler mitigation and computation speed for sparse matrices.
\end{abstract}

\begin{IEEEkeywords}
 Distributed computing, MDS Codes, Stragglers, Sparsity, Privacy.
 \end{IEEEkeywords}
}

\maketitle
\IEEEdisplaynontitleabstractindextext
\IEEEpeerreviewmaketitle
\section{Introduction}
\label{sec:intro}
Computing platforms are constantly stressed to meet the growing demands of end users for data processing. The increasing complexity of data tasks, such as deep neural network AI/ML models, and the sheer volumes of data to be processed, continue to hinder scalability. 

Matrix computations serve as the fundamental building blocks for many data processing tasks in AI/ML and optimization. As data sizes increase, these computations involve high-dimensional matrices, requiring larger runtimes with all else constant. The underlying concept behind distributed computation is to break down the entire operation into smaller tasks and distribute them across multiple worker nodes. However, in these distributed systems, the overall execution time of a job can be significantly affected by slower or failed worker nodes, commonly known as ``stragglers'' \cite{ramamoorthyDTMag20}. 

Recently, a number of coding theory techniques \cite{lee2018speeding, das2019random, dutta2016short, yu2017polynomial, c3les, yu2020straggler, tandon2017gradient, dasunifiedtreatment,  9513242, das2023jsait_submitted, 8849468, 8919859} have been proposed to mitigate the effect of stragglers. A simple example is presented in \cite{lee2018speeding} to illustrate a technique for computing $\bfA^T \bfx$ using three workers. The technique involves partitioning the matrix $\bfA$ into two block-columns, denoted as $\bfA = [\bfA_0 | \bfA_1]$. The workers are then assigned specific tasks: one computes $\bfA_0^T \bfx$, another computes $\bfA_1^T \bfx$, and the third computes $(\bfA_0 + \bfA_1)^T \bfx$. Each worker then handles only half of the computational load, the system can recover $\bfA^T \bfx$ if any two out of the three workers return their results. This means that the system is resilient to the failure or delay of one straggler. In general, the recovery threshold is an important metric defined as the minimum number of workers ($\tau$) required to complete their tasks, enabling the recovery of $\bfA^T \bfx$ from any subset of $\tau$ worker nodes. 

While there are several works that achieve the optimal recovery threshold \cite{yu2017polynomial, 8849468, 8919859, das2019random} for given number of nodes and storage constraints, they possess certain limitations. Real-world datasets, utilized in various domains such as optimization, deep learning, power systems, computational fluid dynamics etc. often consist of sparse matrices. An efficient exploitation of this sparsity can significantly decrease the overall time required for matrix computations \cite{wang2018coded}. However, techniques based on MDS codes \cite{yu2017polynomial, 8849468, 8919859, das2019random} construct dense linear combinations of submatrices; this eliminates the inherent sparsity in the matrix structure. As a consequence, the computation speed of worker nodes can be severely reduced. In this work, one of our objectives is to develop approaches that combine a relatively small number of submatrices while maintaining an optimal recovery threshold.

Another significant issue in distributed computation is the information leakage of the associated ``input'' matrix \cite{tandon2018secure, aliasgari2020private, hollanti2022secure, yu2021entangled}. The assumption is that the input matrix $\bfA$ is known to the central node, but the assigned smaller tasks should involve a protection against information leakage at the worker nodes. Several works \cite{tandon2018secure, aliasgari2020private, hollanti2022secure} propose adding random matrices to the linear combinations of submatrices introduced by MDS codes with the goal of reducing the mutual information between the assigned encoded submatrices and the original matrix $\bfA$. This is again problematic for sparse matrices since the addition of dense random matrices can destroy the sparsity. Thus, we also aim to develop codes that optimize the trade-off between privacy and efficiency.

In this work, first we formulate the problem (Sec. \ref{sec:probform}) and find a lower bound on the number of submatrices to be combined (Sec. \ref{sec:opt_weight}) for coded submatrices that will provide resilience to the maximum number of stragglers in a given system. Next, we develop a novel approach for distributed matrix-vector multiplication (Sec. \ref{sec:prop_approach}) which meets that lower bound, maximizing sparsity preservation while providing resilience to the maximum number of stragglers. Our proposed approach involves a computationally efficient process to find a ``good'' set of random coefficients that make the system numerically stable. Our approach also addresses the privacy issue through a controllable trade-off between privacy leakage and worker computation time for sparse input matrices (Sec. \ref{sec:private}). Finally, we carry out experiments on an Amazon Web Services (AWS) which verify the effectiveness of our proposed methodology compared with baseline approaches in terms of different time, stability, and privacy metrics (Sec. \ref{sec:numexp}). 

\section{Problem Formulation}
\label{sec:probform}
In this work, we examine a distributed system comprising $n$ worker nodes. The primary objective of this system is to calculate the product $\bfA^T \bfx$, where $\bfA \in \mathbb{R}^{t \times r}$ represents a sparse matrix and $\bfx \in \mathbb{R}^{t}$ denotes a vector. It is assumed that the workers are identical in terms of their memory capacity and computational speed. Specifically, each worker can store $\gamma_A = \frac{1}{k_A}$ fraction of the whole matrix $\bfA$, and also, the entire vector $\bfx$. In practical situations, stragglers may arise due to variations in computational speed or failures experienced by certain assigned workers at specific times \cite{das2019random}.

In line with previous approaches, our initial step involves partitioning matrix $\bfA$ into $k_A$ distinct block-columns. Subsequently, we will distribute to each worker node a random linear combination of certain block-columns from $\bfA$ along with the vector $\bfx$. Nevertheless, as discussed in Sec. \ref{sec:intro}, assigning dense linear combinations could lead to the loss of inherent sparsity in the corresponding matrices. To avoid this issue, our goal is to allocate linear combinations involving a smaller number of submatrices \cite{dasunifiedtreatment, das2023distributedisit}. In order to quantify this approach, we introduce the concept of ``weight'' for the encoded submatrices. This measure serves as a crucial metric when dealing with sparse matrices in distributed computations.

\begin{definition}
\label{def:weight}
We define the ``weight'' $(\omega_A)$ of the submatrix encoding procedure as the number of submatrices that are linearly combined to obtain each encoded submatrix. We assume homogeneous weights of the encoded submatrices across the worker nodes, i.e., every node will be assigned linear combinations of the same number of uncoded submatrices.
\end{definition}

Thus, our goal is to obtain the optimal recovery threshold ($\tau = k_A$) while maintaining $\omega_A$ (for the assigned encoded submatrices) as low as possible. We also consider the privacy implications of our approach assuming that the worker nodes are honest but curious.

\section{Minimum Weight of Coding}
\label{sec:opt_weight}
We consider a coded matrix-vector multiplication scheme with homogeneous weight, $\omega_A$, where matrix $\bfA$ is partitioned into $k_A$ disjoint block-columns, $\bfA_0, \bfA_1, \bfA_2, \dots, \bfA_{k_A - 1}$. Now we state the following proposition which provides a lower bound on $\omega_A$ for any coded matrix-vector multiplication scheme with resilience to $s = n - k_A$ stragglers.

\begin{proposition}
\label{prop:lowerbound}
Consider a coded matrix-vector multiplication scheme aiming at resilience to $s = n - k_A$ stragglers out of $n$ total nodes each of which can store $1/k_A$ fraction of matrix $\bfA$. Any scheme that partitions $\bfA$ into $k_A$ disjoint block-columns has to maintain a minimum homogeneous weight $\lceil{\frac{(n-s)(s+1)}{n}}\rceil$.
\end{proposition}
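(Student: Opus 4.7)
The plan is a double-counting argument on the incidence between workers and the $k_A$ block-columns.

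First I would set up the combinatorial object: for each worker node $i \in \{0, 1, \dots, n-1\}$, let $S_i \subseteq \{\bfA_0, \dots, \bfA_{k_A-1}\}$ denote the set of block-columns that appear (with nonzero coefficient) in the linear combination assigned to worker $i$. By the homogeneous-weight assumption of Definition~\ref{def:weight}, $|S_i| = \omega_A$ for every $i$. For each block-column $\bfA_j$, let $d_j$ denote the number of workers whose assignment involves $\bfA_j$, i.e., $d_j = |\{i : \bfA_j \in S_i\}|$.

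The second step is to argue a per-block-column lower bound $d_j \geq s+1$ for every $j$. Suppose for contradiction that some $\bfA_j$ appears in at most $s$ workers. Then the adversary can designate exactly those $d_j \leq s$ workers as stragglers; the surviving $n - d_j \geq n - s = k_A$ workers compute linear combinations that do not involve $\bfA_j$ at all, so their returned values lie in a subspace independent of $\bfA_j$ and the central node cannot recover $\bfA_j$. This contradicts resilience to $s$ stragglers, hence $d_j \geq s+1$.

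The third step is the double count: summing incidences in two ways,
\begin{equation*}
n \, \omega_A \;=\; \sum_{i=0}^{n-1} |S_i| \;=\; \sum_{j=0}^{k_A-1} d_j \;\geq\; k_A (s+1) \;=\; (n-s)(s+1).
\end{equation*}
Dividing by $n$ and using that $\omega_A$ is a positive integer yields $\omega_A \geq \lceil (n-s)(s+1)/n \rceil$, as claimed.

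The only subtle step is justifying $d_j \geq s+1$ rigorously. A skeptic might worry that even if $\bfA_j$ appears in only $s$ workers, some algebraic cancellation in the surviving responses could still allow recovery. The clean way to block this is to observe that the returned values of the $n - d_j$ surviving workers are a linear function of $\{\bfA_\ell^T \bfx : \ell \neq j\}$ only, so they are constant as $\bfA_j$ varies with $\bfx$ and the other $\bfA_\ell$ fixed; hence $\bfA_j^T \bfx$ is not determined by them. Everything else is a routine pigeonhole/averaging argument, so I expect no further obstacles.
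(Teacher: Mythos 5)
Your proposal is correct and follows essentially the same route as the paper: both establish that each block-column $\bfA_j$ must appear in at least $s+1$ workers (the paper asserts this directly; you justify it with the adversarial-straggler argument) and then count incidences to get $n\,\omega_A \geq k_A(s+1)$, yielding the ceiling bound. Your added justification of the $d_j \geq s+1$ step is a welcome bit of extra rigor, but the argument is the same.
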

\begin{proof}
Since the scheme aims at resilience to {\it any} $s$ stragglers, any scheme needs to ensure the presence of any $\bfA_i$ (where $i = 0, 1, \dots, k_A - 1$) in at least $s+1$ different nodes. In other words, $\bfA_i$ has to participate within the encoded submatrices in at least $s+1$ different nodes. Now, we assume homogeneous weight $\omega_A$, i. e., each of these $n$ nodes is assigned a linear combination of $\omega_A$ uncoded submatrices from $\bfA$. Thus, we can say $n \; \omega_A \geq k_A (s+1)$, hence,
\begin{align*}
\omega_A \geq {\frac{(n-s)(s+1)}{n}}.
\end{align*} Thus, the minimum homogeneous weight, $\hat{\omega}_A = \Bigl\lceil{\frac{(n-s)(s+1)}{n}}\Bigr\rceil$.
\end{proof}

Now we state the following corollary (of Proposition \ref{prop:lowerbound}) which considers different values of $k_A$ in terms of $s$, and provides the corresponding optimal weights for coded sparse matrix-vector multiplication.

\begin{corollary} 
\label{cor:lowerbounds}
Consider the same setting as Prop. \ref{prop:lowerbound} for coded matrix-vector multiplication. Now, 
\begin{itemize}
    \item (i) if $k_A > s^2$, then $\hat{\omega}_A = s + 1$.
    \item (ii) if $s \leq k_A \leq s^2$, then $\ceil{\frac{s+1}{2}} \leq \hat{\omega}_A \leq s$.
\end{itemize}
\end{corollary}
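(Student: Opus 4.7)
The plan is to reduce everything to the closed-form lower bound from Proposition~\ref{prop:lowerbound}. Since $n = k_A + s$, the quantity to analyze is
\[
\hat{\omega}_A \;=\; \Bigl\lceil \tfrac{k_A(s+1)}{k_A+s} \Bigr\rceil.
\]
A convenient rewrite is
\[
\tfrac{k_A(s+1)}{k_A+s} \;=\; (s+1) \;-\; \tfrac{s(s+1)}{k_A+s},
\]
which exhibits $\hat{\omega}_A$ as a monotonically increasing function of $k_A$ that approaches $s+1$ from below. The whole corollary then becomes a matter of determining, for each of the two regimes of $k_A$, which two consecutive integers sandwich $\tfrac{k_A(s+1)}{k_A+s}$.

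For part (i), I would show the two inequalities $s < \tfrac{k_A(s+1)}{k_A+s} < s+1$. The right one is immediate from the rewrite above (the subtracted term is strictly positive). For the left one, cross-multiplying gives the equivalent condition $k_A(s+1) > s(k_A+s)$, which simplifies to $k_A > s^2$: exactly the hypothesis. Taking ceilings yields $\hat{\omega}_A = s+1$.

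For part (ii), I would derive both ends of the sandwich by the same cross-multiplication trick. The upper bound $\tfrac{k_A(s+1)}{k_A+s} \leq s$ is equivalent to $k_A(s+1) \leq s(k_A+s)$, i.e.\ $k_A \leq s^2$; the lower bound $\tfrac{k_A(s+1)}{k_A+s} \geq \tfrac{s+1}{2}$ is equivalent to $2k_A \geq k_A + s$, i.e.\ $k_A \geq s$. Both are exactly the two hypotheses of the corollary, so the claim follows upon taking ceilings (noting $s$ is already an integer, and $\lceil \tfrac{s+1}{2} \rceil$ accommodates the parity of $s$).

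I do not expect a real obstacle here: the whole statement is an arithmetic consequence of Proposition~\ref{prop:lowerbound}. The only minor bookkeeping is to check the boundary case $k_A = s^2$, which lies in both ranges under a non-strict reading but yields $\hat{\omega}_A = s$ consistently under either part. I would include this remark at the end for completeness.
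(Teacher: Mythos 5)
Your proposal is correct and follows essentially the same route as the paper: both reduce the corollary to arithmetic on the closed-form bound $\bigl\lceil \tfrac{k_A(s+1)}{k_A+s} \bigr\rceil$ from Proposition~\ref{prop:lowerbound}. The only cosmetic difference is that in part (ii) you verify the two inequalities directly by cross-multiplication for every $k_A$ in the range, whereas the paper evaluates the bound at the endpoints $k_A = s$ and $k_A = s^2$ and invokes monotonicity in $k_A$; both arguments are equally elementary and equally valid.
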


\begin{proof}
Since $n = k_A + s$, from Prop. \ref{prop:lowerbound}, we have 
\begin{align}
\label{eq:omega_bound}
    \hat{\omega}_A = \Bigl\lceil{\frac{k_A(s+1)}{k_A + s}}\Bigr\rceil = \Bigl\lceil{\frac{1 + s}{1 + \frac{s}{k_A}}}\Bigr\rceil ;
\end{align}hence, $\hat{\omega}_A$ is a non-decreasing function of $k_A$ for fixed $s$. 

\noindent {\bf Part (i):} When $k_A > s^2$, we have $\frac{s}{k_A} < \frac{1}{s}$, and $\frac{1 + s}{1 + \frac{s}{k_A}} > \frac{1 + s}{1 + \frac{1}{s}} = s$. Thus, from \eqref{eq:omega_bound}, $\hat{\omega}_A > s$. In addition, from \eqref{eq:omega_bound}, for any $s \geq 0$, we have $\hat{\omega}_A \leq s + 1$. Thus, we have $\hat{\omega}_A = s + 1$.

\noindent {\bf Part (ii):} If $k_A = s^2$, from \eqref{eq:omega_bound}, we have $\hat{\omega}_A = s$. Similarly, if $k_A = s$, from \eqref{eq:omega_bound}, we have $\hat{\omega}_A = \ceil{\frac{s+1}{2}}$. Thus, the non-decreasing property of $\hat{\omega}_A$ in terms of $k_A$ concludes the proof.
\end{proof}

Now we describe a motivating example below where the encoding scheme meets the lower bound mentioned in Prop. \ref{prop:lowerbound}.

\begin{example}
\label{ex:toy_matvec}
\begin{figure}[t]
\centering
\definecolor{mycolor6}{rgb}{0.92941,0.69412,0.12549}%

\resizebox{0.99\linewidth}{!}{
\begin{tikzpicture}[auto, thick, node distance=2cm, >=triangle 45]

\draw

    node [sum, minimum size = 0.8cm, fill=blue!30] (blk1) {$W0$}
    node [sum, minimum size = 0.8cm,fill=blue!30,right = 1 cm of blk1] (blk2) {$W1$}
    node [sum, minimum size = 0.8cm,fill=blue!30,right = 1 cm of blk2] (blk3) {$W2$}
    node [sum, minimum size = 0.8cm,fill=blue!30,right = 1 cm of blk3] (blk4) {$W3$}
    node [sum, minimum size = 0.8cm,fill=blue!30,right = 1 cm of blk4] (blk5) {$W4$}
    node [sum, minimum size = 0.8cm,fill=blue!30,right = 1 cm of blk5] (blk6) {$W5$}
    
    node [block, fill=green!30, minimum width = 4.8em, below = 0.5 cm of blk1] (blk11) {$\left\lbrace\bfA_0, \bfA_1 \right\rbrace$}
    node [block, fill=green!30, minimum width = 4.8em, below = 0.5 cm of blk2] (blk21) {$\left\lbrace\bfA_1, \bfA_2 \right\rbrace$}
    node [block, fill=green!30, minimum width = 4.8em, below = 0.5 cm of blk3] (blk31) {$\left\lbrace\bfA_2, \bfA_3 \right\rbrace$}
    node [block, fill=green!30, minimum width = 4.8em, below = 0.5 cm of blk4] (blk41) {$\left\lbrace\bfA_3, \bfA_0 \right\rbrace$}
    node [block, fill=green!30, minimum width = 4.8em, below = 0.5 cm of blk5] (blk51) {$\left\lbrace\bfA_0, \bfA_2\right\rbrace$}
    node [block, fill=green!30, minimum width = 4.8em, below = 0.5 cm of blk6] (blk61) {$\left\lbrace\bfA_1, \bfA_3 \right\rbrace$}
        
;
\draw[->](blk1) -- node{} (blk11);
\draw[->](blk2) -- node{} (blk21);
\draw[->](blk3) -- node{} (blk31);
\draw[->](blk4) -- node{} (blk41);
\draw[->](blk5) -- node{} (blk51);
\draw[->](blk6) -- node{} (blk61);

\end{tikzpicture}
}
\caption{\small Submatrix allocation for a system with $n = 6$, $s = 2$ and $\gamma_A = \frac{1}{4}$ according to Alg. \ref{Alg:New_matvec}. Here, the weight of every coded submatrix is $\omega_A = \Bigl\lceil\frac{k_A(s+1)}{k_A + s}\Bigr\rceil = 2$. Any $\{\bfA_i, \bfA_j\}$ indicates a random linear combination of $\bfA_i$ and $\bfA_j$.}
\label{matvec6_new}
\end{figure}
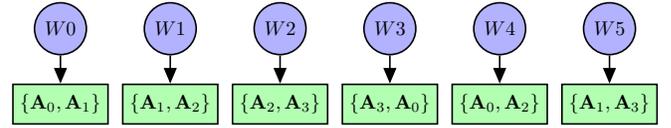 
Consider a toy system with $n = 6$ worker nodes each of which can store $1/4$ fraction of matrix $\bfA$. We partition matrix $\bfA$ into $k_A = 4$ disjoint block-columns, $\bfA_0, \bfA_1, \bfA_2, \bfA_3$. According to Prop. \ref{prop:lowerbound}, the optimal weight $\omega_A$ can be as low as $\Bigl\lceil\frac{k_A(s+1)}{k_A + s}\Bigr\rceil = 2$. Now, we observe that the way the jobs are assigned in Fig. \ref{matvec6_new} meets that lower bound, where random linear combinations of $\omega_A = 2$ submatrices are assigned to the nodes. It can be verified that this system has a recovery threshold $\tau = k_A = 4$, and thus, it is resilient to any $s = 2$ stragglers.
\end{example}

\section{Proposed Approach}
\label{sec:prop_approach}

In this section, we detail our overall approach for distributed matrix-vector multiplication which is outlined in Alg. \ref{Alg:New_matvec}. We partition matrix $\bfA$ into $k_A$ block columns, $\bfA_0, \bfA_1, \bfA_2, \dots, \bfA_{k_A - 1}$, and assign a random linear combination of $\omega_A$ (weight) submatrices of $\bfA$ to every worker node. We show that for given $n$ and $k_A$, our proposed approach provides resilience to maximum number of stragglers, $s = n - k_A$. In addition, our coding scheme maintains the minimum weight of coding as mentioned in Prop. \ref{prop:lowerbound}.

Formally, we set $\omega_A = \Bigl\lceil{\frac{k_A(s+1)}{k_A + s}}\Bigr\rceil$, and assign a linear combination of $\bfA_i, \bfA_{i + 1}, \bfA_{i + 2}, \dots, \bfA_{i+\omega_A - 1} \, \left(\textrm{indices modulo} \, k_A \right)$ to worker node $W_i$, for $i = 0, 1, 2, \dots, k_A-1$, where the linear coefficients are chosen randomly from a continuous distribution. Next, we assign a random linear combination of $\bfA_{i\omega_A}, \bfA_{i\omega_A + 1}, \bfA_{i\omega_A + 2}, \dots, \bfA_{(i+1)\omega_A - 1} \, \left(\textrm{indices modulo} \, k_A \right)$ to worker node $W_{i}$, for $i = k_A, k_A + 1, \dots, n-1$. Note that every worker node also receives the vector $\bfx$. Once the fastest $\tau = k_A$ worker nodes finish and return their computation results, the central node decodes $\bfA^T \bfx$. Note that we assume $k_A \geq s$, i.e., at most {\it half} of the nodes may be stragglers.

\begin{algorithm}[t]
	\caption{Proposed scheme for distributed matrix-vector multiplication}
	\label{Alg:New_matvec}
   \SetKwInOut{Input}{Input}
   \SetKwInOut{Output}{Output}
   \vspace{0.1 in}
   \Input{Matrix $\bfA$, vector $\bfx$, $n$-number of workers, $s$-number of stragglers, storage fraction $\gamma_A = \frac{1}{k_A}$, such that $k_A \geq s$.}
   Partition $\bfA$ into $k_A$ disjoint block-columns\;
   Set weight $\, \omega_A =\Bigl\lceil\frac{k_A(s+1)}{k_A + s}\Bigr\rceil$\;
   \For{$i\gets 0$ \KwTo $n-1$}{
   \eIf{$i < k_A$}
   {
   Define $T = \left\lbrace i, i+1, \dots, i + \omega_A - 1 \right\rbrace$ (reduced modulo $k_A$)\;
   }
   { 
   Define $T = \left\lbrace i \omega_A, i \omega_A + 1, \dots, (i+1)\omega_A - 1 \right\rbrace$ (reduced modulo $k_A$)\;
   }
   Create a random vector $\bfr$ of length $k_A$ with entries  $r_{m}$, $0\leq m \leq k_A - 1$\;
   Create a random linear combination of $\bfA_{q}$'s where $q \in T$, thus $\tilde{\bfA}_i = \sum\limits_{q \in T} r_{q} \bfA_q$\;
   Assign encoded submatrix $\tilde{\bfA}_i$ and the vector $\bfx$ to worker node $W_i$\;
   }
   \Output{The central node recovers $\bfA^T \bfx$ from the returned results by the fastest $k_A$ nodes.}
   \vspace{0.1 in}
\end{algorithm}

\subsection{Straggler Resilience Guarantee}
Next we state the following lemma which would assist us to prove Theorem \ref{thm:matvec} which discusses straggler resilience of our proposed scheme.

\begin{lemma}
\label{lem:hall}
Choose any $m \leq k_A$ worker nodes out of all $n$ nodes in the distributed system. Now, if we assign the jobs to the worker nodes according to Alg. \ref{Alg:New_matvec}, the total number of participating uncoded $\bfA$ submatrices within those $m$ worker nodes is lower bounded by $m$. 
\end{lemma}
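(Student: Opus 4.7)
My plan is to prove Lemma~\ref{lem:hall} by verifying Hall's marriage condition for the bipartite graph with the workers $W_0,\ldots,W_{n-1}$ on the left, the submatrix indices $\{0,1,\ldots,k_A-1\}$ on the right, and an edge $(W_i,j)$ whenever $\bfA_j$ participates in the encoded submatrix $\tilde{\bfA}_i$. Showing $|N(M)|\ge m$ for every $M$ with $|M|=m\le k_A$ is exactly the claim.

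First I would reduce the statement to a combinatorial question about cyclic arcs. Split $M = M_1 \sqcup M_2$ with $M_1 = M \cap \{W_0,\ldots,W_{k_A-1}\}$ and $M_2 = M \cap \{W_{k_A},\ldots,W_{n-1}\}$. Each worker in $M$ has a starting position $a_i \in \mathbb{Z}/k_A\mathbb{Z}$ (namely $a_i = i$ for $i < k_A$ and $a_i = i\omega_A \bmod k_A$ otherwise), and its assigned submatrix set equals the length-$\omega_A$ cyclic arc $U(a_i) = \{a_i, a_i+1, \ldots, a_i+\omega_A-1\} \pmod{k_A}$. Hence $|N(M)| = |\bigcup_{p\in S} U(p)|$, where $S\subseteq \{0,\ldots,k_A-1\}$ is the set of \emph{distinct} starting positions used by $M$; note $|S|\le m$, and I write $d := m - |S|$ for the number of ``duplicates,'' which by construction can only arise when some $M_2$-worker shares its starting position with another worker in $M$.

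I would then apply a cyclic gap-counting argument. If $g_1, \ldots, g_{|S|}$ denote the cyclic gap lengths between consecutive elements of $S$ (so $\sum_r g_r = k_A - |S|$), a position $q$ lies outside every window $U(p)$ iff the arc $\{q-\omega_A+1, \ldots, q\}$ is disjoint from $S$; counting such positions gap by gap yields $|N(M)| = k_A - \sum_r \max(0, g_r - \omega_A + 1) \ge \min(k_A,\, |S|+\omega_A-1)$. This already gives $|N(M)| \ge m$ whenever the duplicate count satisfies $d \le \omega_A - 1$, resolving the ``easy'' regime.

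The main obstacle is the remaining case $d \ge \omega_A$, in which many $M_2$-workers must duplicate existing starts. Here I would exploit two structural facts about Alg.~\ref{Alg:New_matvec}: (a) the $M_2$-starting positions $j\omega_A \bmod k_A$ for $j=0,\ldots,s-1$ lie in the evenly-spaced subgroup $\langle \omega_A \rangle \le \mathbb{Z}/k_A\mathbb{Z}$, whose common difference $d^\ast = \gcd(\omega_A, k_A)$ satisfies $d^\ast \le \omega_A$; and (b) the multiplicity $c_p$ of any single starting position across the $n$ workers is bounded by $1 + \lceil s d^\ast / k_A \rceil$, which I would verify is at most $\omega_A$ using the minimum-weight inequality $\omega_A \ge k_A(s+1)/(k_A+s)$. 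Because any adversary producing $d \ge \omega_A$ duplicates is forced by~(b) to pile workers onto several distinct multiples of $d^\ast$, and those positions are---by~(a)---evenly spaced at common difference at most $\omega_A$, the resulting $S$ must be spread out enough that every cyclic gap between consecutive $S$-elements has length at most $\omega_A - 1$, eliminating all ``long gaps'' and giving $|N(M)| = k_A \ge m$. I anticipate that formalizing this final implication cleanly is the subtlest step, likely requiring a short case analysis on whether $k_A > s^2$ (Cor.~\ref{cor:lowerbounds}(i), where $\omega_A = s+1$ makes a direct double-count argument $m\omega_A \le (s+1)|N(M)|$ suffice) or $s \le k_A \le s^2$ (Cor.~\ref{cor:lowerbounds}(ii), where the spread-out argument above is needed).
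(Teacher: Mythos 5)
Your reduction to Hall's condition, the split $M=M_1\sqcup M_2$, and the gap-counting bound $|N(M)|\ge\min\left(k_A,\,|S|+\omega_A-1\right)$ are all correct, and they cleanly dispose of the case $d\le\omega_A-1$; this part is in fact slightly more general than the paper's corresponding step, which applies the cyclic-coverage bound only to the chosen $\calW_0$-nodes. The genuine gap is the case $d\ge\omega_A$. What you actually need there is: since $d\le|M_2|$, you have $|M_2|\ge\omega_A$, and you must show that \emph{any} $\omega_A$ workers from $\calW_1$ already have all $k_A$ submatrices among their arcs (this is exactly Claim~\ref{clm:m1gw1} of the paper). Your structural facts (a) and (b) do not deliver this. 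First, the $\calW_1$ starting positions are only the first $s$ multiples $0,\omega_A,\dots,(s-1)\omega_A\pmod{k_A}$; when $s<k_A/\gcd(\omega_A,k_A)$ they do not exhaust the subgroup $\langle\omega_A\rangle$ and are not evenly spaced at common difference $\gcd(\omega_A,k_A)$ (e.g.\ $k_A=7$, $s=3$, $\omega_A=3$ gives positions $\{0,3,6\}$ while the gcd is $1$). Second, even granting (b), $d\ge\omega_A$ only forces duplicates at two or more distinct positions, and two multiples of $\gcd(\omega_A,k_A)$ can easily be more than $\omega_A$ apart, so nothing in (a) plus (b) bounds the cyclic gaps of $S$ by $\omega_A-1$. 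The conclusion you want is true, but the reason is a balanced-occupancy count rather than the subgroup structure: the $s\omega_A$ slots of $\calW_1$ are filled by the consecutive residues $0,1,\dots,s\omega_A-1 \bmod k_A$, so every index $\bfA_i$ appears in at least $\lfloor s\omega_A/k_A\rfloor$ distinct $\calW_1$-nodes and is absent from at most $s-\lfloor s\omega_A/k_A\rfloor$ of them; since $\omega_A(k_A+s)/k_A\ge s+1$ implies $\omega_A+\lfloor s\omega_A/k_A\rfloor>s$, any $\omega_A$ chosen $\calW_1$-nodes must contain every index.

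Your fallback for $k_A>s^2$ also does not work as stated: the degree of a submatrix index in the full bipartite graph can be $s+2$ (it lies in $\omega_A=s+1$ arcs from $\calW_0$ plus possibly one from $\calW_1$), so $m\omega_A\le(s+1)|N(M)|$ is not a valid double count. Fortunately it is also unnecessary: when $k_A>s^2$ one has $\omega_A=s+1>s\ge|M_2|\ge d$, so the hard case $d\ge\omega_A$ is vacuous and your gap-counting bound alone finishes the proof in that regime.
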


\begin{proof}
First we partition all $n$ worker nodes into {\it two} sets where the first set, $\calW_0$ includes the first $k_A$ nodes and the second set, $\calW_1$, includes the next $s$ worker nodes, i.e., we have 
\begin{align*}
    \calW_0 &= \left\lbrace W_0, W_1, W_2, \dots, W_{k_A - 1} \right\rbrace ; \\
\textrm{and} \;\; \;  \calW_1 &= \left\lbrace W_{k_A}, W_{k_A+1}, \dots, W_{n-1}  \right\rbrace . 
\end{align*} Thus, we have $|\calW_0| = k_A$ and $|\calW_1| = s \leq k_A$. Now, we choose any $m \leq k_A$ worker nodes, where we choose $m_0$ nodes from $\calW_0$ and $m_1$ nodes from $\calW_1$, so that $m = m_0 + m_1$. We denote set of the participating uncoded $\bfA$ submatrices within those nodes as $\calA_0$ and $\calA_1$, respectively. Hence, to prove the lemma, we need to show $|\calA_0 \cup \calA_1| \geq m$, for any $m \leq k_A$. 

First, according to Alg. \ref{Alg:New_matvec}, we assign a random linear combination of $\bfA_i, \bfA_{i + 1}, \bfA_{i + 2}, \dots, \bfA_{i+\omega_A - 1} \, \left(\textrm{indices modulo} \, k_A \right)$ to worker node $W_i \in \calW_0$. Thus, the participating submatrices are assigned in a cyclic fashion \cite{das2020coded}, and the total number of participating submatrices within any $m_0$ nodes of $\calW_0$ is  
\begin{align}
\label{eq:m1}
|\calA_0| \geq \min (m_0 + \omega_A - 1, k_A).
\end{align} Next, we state the following claim for the number of participating submatrices in $\calW_1$, with the proof in Appendix \ref{app:proofclaim1}.

\begin{claim}
\label{clm:m1gw1}
Choose any $m_1 \geq \omega_A$ nodes from $\calW_1$. The number of participating submatrices within these nodes, $|\calA_1| = k_A$. 
\end{claim}

Now, if $m_1 \leq \omega_A - 1$, from \eqref{eq:m1} we have 
\begin{align*}
    |\calA_0 \cup \calA_1| \geq |\calA_0| & =  \min (m_0 + \omega_A - 1, k_A) \\ 
    & \geq  \min (m_0 + m_1, k_A) \geq m ,
\end{align*} since $m = m_0 + m_1 \leq k_A$. And, if $m_1 \geq \omega_A$, from Claim \ref{clm:m1gw1} we can say,
\begin{align*}
    |\calA_0 \cup \calA_1| \geq |\calA_1| =  k_A \geq m,
\end{align*} which concludes the proof of the lemma.
\end{proof}

\begin{example}
Consider the same scenario in Example \ref{ex:toy_matvec}, where $k_A = 4$ and $s = 2$, therefore, $\calW_0 = \{W_0, W_1, W_2, W_3\}$ and $\calW_1 = \{ W_4, W_5 \}$. Now, choose $m = 3$ nodes, $W_0, W_1$ and $W_4$. Thus, $m_0 = 2$ and $m_1 = 1$. Now, from the figure, we have $\calA_0 = \{\bfA_0, \bfA_1, \bfA_2\}$ and $\calA_1 = \{\bfA_0, \bfA_1\}$. Hence, $|\calA_0 \cup \calA_1| = 3 \geq m$. Similar properties can be shown for any choice $m \leq k_A = 4$ different nodes.
\end{example}

Now we state the following theorem which provides the guarantee of resilience to maximum number of stragglers for given storage constraints.

\begin{theorem}
\label{thm:matvec}
Assume that a system has $n$ worker nodes each of which can store $1/k_A$ fraction of matrix $\bfA$ and the whole vector $\bfx$ for the distributed matrix-vector multiplication $\mathbf{A}^T \mathbf{x}$. If we assign the jobs according to Alg. \ref{Alg:New_matvec}, we achieve resilience to $s = n - k_A$ stragglers.
\end{theorem}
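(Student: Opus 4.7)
The plan is to reduce straggler resilience to the invertibility of a random coefficient matrix and then invoke Lemma \ref{lem:hall} through Hall's marriage theorem. Specifically, suppose that $s$ nodes straggle and let $\calS \subseteq \{0,1,\ldots,n-1\}$ index the set of $k_A$ nodes that return their results. Each returned result has the form $\tilde{\bfA}_i^T \bfx = \sum_{q \in T_i} r_{i,q} \bfA_q^T \bfx$ for $i \in \calS$, so I can stack these into a linear system $\bfC\, \bfy = \bfz$, where $\bfy$ is the length-$k_A$ vector whose entries are $\bfA_q^T \bfx$, $\bfz$ is the vector of received results, and $\bfC$ is a $k_A \times k_A$ matrix whose $(i,q)$ entry equals the random coefficient $r_{i,q}$ if $q \in T_i$ and $0$ otherwise. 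Decoding $\bfA^T \bfx$ is therefore equivalent to showing that $\bfC$ is nonsingular.

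Next, I would build the bipartite graph $G$ whose left vertices are the surviving workers $\calS$ and whose right vertices are the submatrix indices $\{0,1,\ldots,k_A-1\}$, with an edge $(i,q)$ whenever $q \in T_i$. For any subset $\mathcal{M} \subseteq \calS$, Lemma \ref{lem:hall} (applied with $m = |\mathcal{M}|$) guarantees that $|N_G(\mathcal{M})| \geq |\mathcal{M}|$, which is precisely Hall's condition. Hence $G$ admits a perfect matching, i.e., there exists a bijection $\sigma : \calS \to \{0,1,\ldots,k_A-1\}$ such that $\sigma(i) \in T_i$ for every $i \in \calS$.

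Such a matching picks out a single term $\prod_{i \in \calS} r_{i,\sigma(i)}$ in the Leibniz expansion of $\det(\bfC)$ that is a nonzero monomial in the random coefficients. Since the other Leibniz terms can at most cancel this monomial partially, $\det(\bfC)$, viewed as a polynomial in the entries $\{r_{i,q}\}$, is not identically zero. Because the coefficients are drawn from a continuous distribution, the Schwartz--Zippel argument (equivalently, the fact that the zero set of a nonzero polynomial has Lebesgue measure zero) implies that $\det(\bfC) \neq 0$ with probability $1$. Thus $\bfC$ is almost surely invertible and the central node recovers $\bfA^T \bfx$ from any $k_A = n - s$ returned results.

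The main obstacle I anticipate is the transition from the combinatorial covering statement of Lemma \ref{lem:hall} to the algebraic invertibility of $\bfC$: one must be careful that Hall's condition alone only guarantees a nonzero Leibniz term, not non-cancellation, and therefore the continuous-distribution / Schwartz--Zippel step is essential. A minor secondary point is that Lemma \ref{lem:hall} is stated for arbitrary $m \leq k_A$, so I need to verify it applies uniformly to every subset of $\calS$ (which it does, since the lemma holds for any choice of $m$ nodes from the full set of $n$).
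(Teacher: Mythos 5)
Your proposal is correct and follows essentially the same route as the paper: reduce decoding to invertibility of the $k_A \times k_A$ coefficient matrix, verify Hall's condition for the worker--submatrix bipartite graph via Lemma \ref{lem:hall} to extract a perfect matching, and conclude with Schwartz--Zippel. The only cosmetic difference is the last step: rather than arguing as you do about the Leibniz expansion (where, to be precise, the matching's monomial cannot be cancelled \emph{at all}, since distinct permutations yield distinct monomials in the indeterminates $r_{i,q}$, so it survives with coefficient $\pm 1$ --- this non-cancellation, not Schwartz--Zippel, is what makes the determinant a nonzero polynomial), the paper substitutes the specific coefficient values that turn the matrix into a permutation matrix and observes it is full rank there, which establishes the same fact.
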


\begin{proof}
According to Alg. \ref{Alg:New_matvec}, first we partition matrix $\bfA$ into $k_A$ disjoint block-columns. Thus, to recover the matrix-vector product, $\bfA^T \bfx$, we need to decode all $k_A$ vector unknowns, $\bfA^T_0 \bfx, \bfA^T_1 \bfx, \bfA^T_2 \bfx, \dots, \bfA^T_{k_A - 1} \bfx$. We denote the set of these $k_A$ unknowns as $\calB$. Now we choose an arbitrary set of $k_A$ worker nodes each of which corresponds to an equation in terms of $\omega_A$ of those $k_A$ unknowns. Denoting the set of $k_A$ equations as $\calC$, we can say,  $|\calB| = |\calC| = k_A$. 

\begin{figure}[t]
\definecolor{myblue}{RGB}{80,80,160}
\definecolor{mygreen}{RGB}{80,160,80}
\centering
\captionsetup{justification=centering}
\resizebox{0.65\linewidth}{!}{

\begin{tikzpicture}[thick,
  every node/.style={draw,circle},
  fsnode/.style={fill=myblue},
  ssnode/.style={fill=mygreen},
  every fit/.style={ellipse,draw,inner sep=-2pt,text width=2cm},
  ->,shorten >= 3pt,shorten <= 3pt
]
\begin{scope}[start chain=going below,node distance=7mm]
\foreach \i in {1,2,...,5}
{
  \pgfmathtruncatemacro{\j}{\i - 1}
  \node[fsnode,on chain] (f\i) [label=left: $c_{\j}$] {};
}
\end{scope}

\begin{scope}[xshift=4cm,start chain=going below,node distance=7mm]
\foreach \i in {6,7,...,10}
{ 
  \pgfmathtruncatemacro{\j}{\i - 6}
  \node[ssnode,on chain] (s\i) [label=right: $b_{\j}$] {};
}
\end{scope}

\node [myblue,fit=(f1) (f5),label=above:$\mathcal{C}$] {};
\node [mygreen,fit=(s6) (s10),label=above:$\mathcal{B}$] {};

\draw[-] (s6) -- (f1);
\draw[-] (s7) -- (f1);
\draw[-] (s8) -- (f1);
\draw[-] (s6) -- (f2);
\draw[-] (s7) -- (f2);
\draw[-] (s9) -- (f2);
\draw[-] (s6) -- (f3);
\draw[-] (s9) -- (f3);
\draw[-] (s10) -- (f3);
\draw[-] (s7) -- (f4);
\draw[-] (s8) -- (f4);
\draw[-] (s10) -- (f4);
\draw[-] (s8) -- (f5);
\draw[-] (s9) -- (f5);
\draw[-] (s10) -- (f5);

\end{tikzpicture}
}
\caption{\small A bipartite graph $\calG = \calC \cup \calB$ with $|\calC| = |\calB| = 5$ where the set of equations is $\calC$ and the set of unknowns is $\calB$. Here, $\omega_A = 3$.}
\label{fig:hall_bipartite}
\end{figure} 
Now we consider a bipartite graph $\calG = \calC \cup \calB$, where any vertex (equation) in $\calC$ is connected to some vertices (unknowns) in $\calB$ which participate in the corresponding equation. Thus, each vertex in $\calC$ has a neighborhood of cardinality $\omega_A$ in $\calB$. An example with $k_A = 5$ and $\omega_A = 3$ is shown in Fig. \ref{fig:hall_bipartite}. 

Our goal is to show that there exists a perfect matching among the vertices of $\calC$ and $\calB$. To do so, we consider $\bar{\calC} \subseteq \calC$, where $|\bar{\calC}| = m \leq k_A$. Now, we denote the neighbourhood of $\bar{\calC}$  as $\calN (\bar{\calC}) \subseteq \calB$. Thus, according to Lemma \ref{lem:hall}, for any $m \leq k_A$, we can say that $|\calN (\bar{\calC})| \geq m$. So, according to Hall's marriage theorem \cite{marshall1986combinatorial}, we can say that there exists a perfect matching among the vertices of $\calC$ and $\calB$.

Next we consider the largest matching where the vertex $c_i \in \calC$ is matched to the vertex $b_j \in \calB$, which indicates that $b_j$ participates in the equation corresponding to $c_i$. Now, considering $k_A$ equations and $k_A$ unknowns, we construct the $k_A \times k_A$ coding (or decoding) matrix $\bfH$ where row $i$ corresponds to the equation associated to $c_i$ where $b_j$ participates. We replace row $i$ of $\bfH$ by $\bfe_j$ where $\bfe_j$ is a unit row-vector of length $k_A$ with the $j$-th entry being $1$, and $0$ otherwise. Thus we have a $k_A \times k_A$ matrix where each row has only one non-zero entry which is $1$. In addition, since we have a perfect matching, $\bfH$ will have only one non-zero entry in every column. Thus, $\bfH$ is a permutation of the identity matrix, and therefore, $\bfH$ is full rank. Since the matrix is full rank for a choice of definite values, according to Schwartz-Zippel lemma \cite{schwartz1980fast}, the matrix continues to be full rank for random choices of non-zero entries. Thus, the central node can recover all $k_A$ unknowns from any set of $k_A$ worker nodes.
\end{proof}

\begin{example}
\label{ex:12_3}
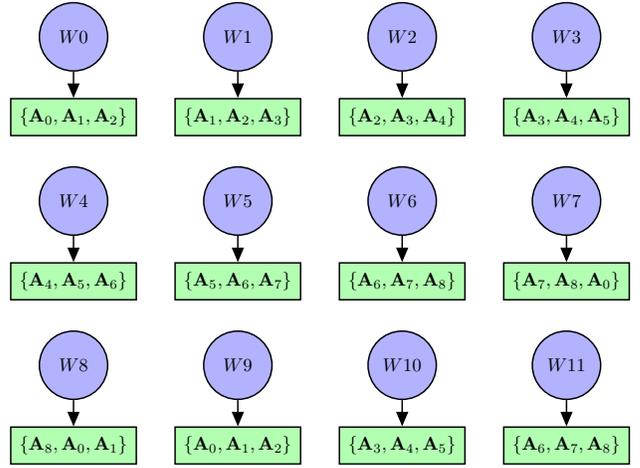
\begin{figure}[t]
\centering
\definecolor{mycolor6}{rgb}{0.92941,0.69412,0.12549}%

\resizebox{0.95\linewidth}{!}{
\begin{tikzpicture}[auto, thick, node distance=2cm, >=triangle 45]

\draw

    node [sum, minimum size = 1.3cm, fill=blue!30] (blk1) {$W0$}
    node [sum, minimum size = 1.3cm,fill=blue!30,right = 1.8 cm of blk1] (blk2) {$W1$}
    node [sum, minimum size = 1.3cm,fill=blue!30,right = 1.8 cm of blk2] (blk3) {$W2$}
    node [sum, minimum size = 1.3cm,fill=blue!30,right = 1.8 cm of blk3] (blk4) {$W3$}
    node [sum, minimum size = 1.3cm,fill=blue!30,below = 1.8 cm of blk1] (blk5) {$W4$}
    node [sum, minimum size = 1.3cm,fill=blue!30,right = 1.8 cm of blk5] (blk6) {$W5$}
    node [sum, minimum size = 1.3cm,fill=blue!30,right = 1.8 cm of blk6] (blk7) {$W6$}
    node [sum, minimum size = 1.3cm,fill=blue!30,right = 1.8 cm of blk7] (blk8) {$W7$}
    node [sum, minimum size = 1.3cm,fill=blue!30,below = 1.8 cm of blk5] (blk9) {$W8$}
    node [sum, minimum size = 1.3cm,fill=blue!30,right = 1.8 cm of blk9] (blk_10) {$W9$}
    node [sum, minimum size = 1.3cm,fill=blue!30,right = 1.8 cm of blk_10] (blk_11) {$W10$}
    node [sum, minimum size = 1.3cm,fill=blue!30,right = 1.8 cm of blk_11] (blk_12) {$W11$}

    node [block, fill=green!30, minimum width = 6.8em, below = 0.5 cm of blk1] (blk11) {$\left\lbrace\bfA_0, \bfA_1, \bfA_2 \right\rbrace$}
    node [block, fill=green!30, minimum width = 6.8em, below = 0.5 cm of blk2] (blk21) {$\left\lbrace\bfA_1, \bfA_2, \bfA_3 \right\rbrace$}
    node [block, fill=green!30, minimum width = 6.8em, below = 0.5 cm of blk3] (blk31) {$\left\lbrace\bfA_2, \bfA_3, \bfA_4 \right\rbrace$}
    node [block, fill=green!30, minimum width = 6.8em, below = 0.5 cm of blk4] (blk41) {$\left\lbrace\bfA_3, \bfA_4, \bfA_5 \right\rbrace$}
    node [block, fill=green!30, minimum width = 6.8em, below = 0.5 cm of blk5] (blk51) {$\left\lbrace\bfA_4, \bfA_5, \bfA_6 \right\rbrace$}
    node [block, fill=green!30, minimum width = 6.8em, below = 0.5 cm of blk6] (blk61) {$\left\lbrace\bfA_5, \bfA_6, \bfA_7 \right\rbrace$}
    node [block, fill=green!30, minimum width = 6.8em, below = 0.5 cm of blk7] (blk71) {$\left\lbrace\bfA_6, \bfA_7, \bfA_8 \right\rbrace$}
    node [block, fill=green!30, minimum width = 6.8em, below = 0.5 cm of blk8] (blk81) {$\left\lbrace\bfA_7, \bfA_8, \bfA_0 \right\rbrace$}
    node [block, fill=green!30, minimum width = 6.8em, below = 0.5 cm of blk9] (blk91) {$\left\lbrace\bfA_8, \bfA_0, \bfA_1 \right\rbrace$}
    node [block, fill=green!30, minimum width = 6.8em, below = 0.5 cm of blk_10] (blk_101) {$\left\lbrace\bfA_0, \bfA_1, \bfA_2 \right\rbrace$}
    node [block, fill=green!30, minimum width = 6.8em, below = 0.5 cm of blk_11] (blk_111) {$\left\lbrace\bfA_3, \bfA_4, \bfA_5 \right\rbrace$}
    node [block, fill=green!30, minimum width = 6.8em, below = 0.5 cm of blk_12] (blk_121) {$\left\lbrace\bfA_6, \bfA_7, \bfA_8 \right\rbrace$}
    
;
\draw[->](blk1) -- node{} (blk11);
\draw[->](blk2) -- node{} (blk21);
\draw[->](blk3) -- node{} (blk31);
\draw[->](blk4) -- node{} (blk41);
\draw[->](blk5) -- node{} (blk51);
\draw[->](blk6) -- node{} (blk61);
\draw[->](blk7) -- node{} (blk71);
\draw[->](blk8) -- node{} (blk81);
\draw[->](blk9) -- node{} (blk91);
\draw[->](blk_10) -- node{} (blk_101);
\draw[->](blk_11) -- node{} (blk_111);
\draw[->](blk_12) -- node{} (blk_121);

\end{tikzpicture}
}
\caption{\small Submatrix allocation for $n = 12$ workers and $s = 3$ stragglers, with $\gamma_A = \frac{1}{9}$ according to Alg. \ref{Alg:New_matvec}. Here, the weight of every submatrix is $\omega_A = \Bigl\lceil\frac{k_A(s+1)}{k_A + s}\Bigr\rceil = 3$. Any $\{\bfA_i, \bfA_j, \bfA_k\}$ indicates a random linear combination of the corresponding submatrices where the coefficients are chosen i.i.d. at random from a continuous distribution.}
\label{matvec12_opt}
\end{figure} 
Consider a system with $n = 12$ nodes each of which can store $1/9$-th fraction of matrix $\bfA$. We partition $\bfA$ as $\bfA_0, \bfA_1, \dots, \bfA_8$. According to Alg. \ref{Alg:New_matvec}, we set the weight $\omega_A = \Bigl\lceil\frac{k_A(s+1)}{k_A + s}\Bigr\rceil = 3$, and assign random linear combinations of $\omega_A$ submatrices to each node as shown in Fig. \ref{matvec12_opt}. It can be verified that $\bfA^T \bfx$ can be recovered from {\it any} $\tau = k_A = 9$ nodes, therefore, the scheme is resilient to {\it any} $s = 3$ stragglers.
\end{example}

\begin{remark}
\label{rem:betterthanjsait}
While our proposed approach meets the lower bound on the weight as mentioned in Prop. \ref{prop:lowerbound}, the approach in \cite{das2023jsait_submitted} assigns a weight $\min(s+1, k_A)$ which can often be higher than ours (e.g., Examples \ref{ex:toy_matvec} and \ref{ex:12_3}), and thus, may lead to reduction in worker computation speed. 
\end{remark}

\subsubsection{Computational Complexity for a Worker Node} 
\label{sec:compcomplexity}
In this work, we assume that the ``input'' matrix, $\bfA \in \mathbb{R}^{t \times r}$, is sparse, i.e., most of the entries of $\bfA$ are zero. Let us assume that the probability for any entry of $\bfA$ to be non-zero is $\mu$, where $\mu > 0$ is very small. According to Alg. \ref{Alg:New_matvec}, we combine $\omega_A$ submatrices (of size $t \times r/k_A$) to obtain the coded submatrices and assign them to the worker nodes. Hence, the probability for any entry of any coded submatrix to be non-zero is $1 - (1 - \mu)^{\omega_A}$ which can be approximated by $\omega_A \mu$. Thus, in our approach, the per worker node computational complexity is $\calO \left( \omega_A \mu \times  \frac{rt}{k_A} \right)$ where $\omega_A = \Bigl\lceil\frac{k_A(s+1)}{k_A + s}\Bigr\rceil$.

On the other hand, the dense coded approaches \cite{yu2017polynomial,8849468, 8919859} combine $k_A$ submatrices for encoding, hence, their per worker node computational complexity is $\calO \left( k_A \mu \times  \frac{rt}{k_A} \right) = \calO \left( \mu \times  r t \right)$ which is $\frac{k_A}{\omega_A} \approx \frac{s + k_A}{ s + 1}$ times higher than that of ours. Moreover, the recent sparse matrix computations approach in \cite{das2023jsait_submitted} combines $s+1$ submatrices for encoding (when $s < k_A$). Thus, its corresponding computational complexity is $\calO \left( (s+1) \mu \times  \frac{rt}{k_A} \right)$; approximately $(1 + s/k)$ times higher than that of ours. We clarify this with the following example.

\begin{example}
Consider the same setting in Example \ref{ex:12_3} where $n = 12$, $k_A = 9$ and $s = 3$. In this scenario, the recent work \cite{das2023jsait_submitted} assigns random linear combinations of $\min(s+1, k_A) = 4$ submatrices to each node. Thus, our proposed approach enjoys a $25\%$ decrease in computational complexity, which could significantly enhance the overall computational speed.
\end{example}


\subsubsection{Numerical Stability and Coefficient Determination Time}
\label{sec:trialtime}
In this section, we discuss the numerical stability of our proposed distributed matrix computations scheme. The condition number is widely regarded as a significant measure of numerical stability for such a system \cite{das2019random, 8849468, 8919859}. In the context of a system consisting of $n$ workers and $s$ stragglers, the worst-case condition number ($\kappa_{worst}$) is defined as the highest condition number among the decoding matrices when considering all possible combinations of $s$ stragglers. In methods involving random coding like ours, the idea is to generate random coefficients multiple (e.g., 20) times  and selecting the set of coefficients that results in the lowest $\kappa_{worst}$ among those trials.

In our proposed method, we partition matrix $\bfA$ into $k_A$ disjoint block-columns, which underscores the necessity to recover $k_A$ vector unknowns. Consequently, in each attempt, we must determine the condition numbers of ${n \choose k_A}$ decoding matrices, each of size $k_A \times k_A$. This whole process has a total complexity of $\calO\left( {n \choose k_A} k_A^3\right)$. On the other hand, the recent sparse matrix computation techniques, such as sparsely coded straggler (SCS) optimal scheme discussed in \cite{das2020coded} or the class-based scheme discussed in \cite{dasunifiedtreatment} partition matrix $\bfA$ into $\Delta_A = \textrm{LCM}(n, k_A)$ block-columns. Thus, in each attempt, they need to ascertain the condition numbers of ${n \choose k_A}$ matrices, each of which has a size $\Delta_A \times \Delta_A$, resulting in a total complexity of $\calO\left( {n \choose k_A} \Delta_A^3\right)$. Since $\Delta_A$ can be considerably larger than $k_A$, those methods involve significantly more complexity compared to our proposed scheme. For instance, if we consider a scenario where $n$ and $k_A$ are co-prime, then $\Delta_A = n k_A$, and thus the complexity of the approaches presented in \cite{das2020coded, dasunifiedtreatment} is approximately $\calO\left( n^3 \right)$ times higher than our method.


\subsection{Private Matrix-vector Multiplication}
\label{sec:private}
Now, we discuss how we can modify Alg. \ref{Alg:New_matvec} to add protection against information leakage of the ``input'' matrix $\bfA$ in the worker nodes, which we assume are honest but curious. The traditional idea developed in several private distributed computations approaches \cite{aliasgari2020private, tandon2018secure} is to add dense random matrices to the submatrices of the ``input'' matrix. While this can provide protection against information leakage up to certain levels, it substantially increases the number of non-zero entries in the encoded submatrices of an originally sparse input matrix, which can reduce the overall computation speed. 

In our scheme, we propose that the central node will generate a sparse matrix $\bfS \in \mathbb{F}^{t \times r/k_A}$ where the probability of any entry being non-zero is $\mu$. Next, the central node will add $\bfS$ to all the encoded submatrices to be assigned to the worker nodes according to Alg. \ref{Alg:New_matvec}. In other words, if the central node was supposed to send the encoded submatrix $\tilde{\bfA}_i$ to worker node $W_i$ according to Alg. \ref{Alg:New_matvec}, then for private sparse matrix computations, the central node will send $\bar{\bfA}_i = \tilde{\bfA}_i + \bfS$ to worker node $W_i$. The upcoming corollary proves that the central node can recover the final result, $\bfA^T \bfx$ from any $k_A + 1$ nodes (in a similar process as in Sec. \ref{sec:prop_approach}). Note that the central node sends the vector $\bfx$ to all $n$ nodes. The overall procedure for private matrix-vector multiplication is outlined in Alg. \ref{Alg:sec_matvec}.

\begin{algorithm}[t]
	\caption{Proposed scheme for Private distributed matrix-vector multiplication for non-colluding nodes}
	\label{Alg:sec_matvec}
   \SetKwInOut{Input}{Input}
   \SetKwInOut{Output}{Output}
   \Input{Matrix $\bfA \in \mathbb{F}^{t \times r}$, vector $\bfx \in \mathbb{F}^{t \times 1}$, $n$-number of nodes, storage fraction $\frac{1}{k_A}$, where $n > k_A$.}
   Create a sparse random matrix $\bfS \in \mathbb{F}^{t \times r/k_A}$, where the probability of any entry to be non-zero is $\mu$\;
   Create a random vector $\bfr$ of length $n$\;
   \For{$i\gets 0$ \KwTo $n-1$}{
   Create encoded submatrix $\tilde{\bfA}_i$ according to Alg. \ref{Alg:New_matvec}\;
   Assign submatrix $\bar{\bfA}_i = \tilde{\bfA}_i + \bfr_i \bfS$ to worker $W_i$\;
   Assign vector $\bfx$  to worker node $W_i$\;
   }
   \Output{The central node recovers $\bfA^T \bfx$ from the returned results by the fastest $k_A + 1$ nodes.}
\end{algorithm}

\begin{corollary}
\label{cor:private}
Assume that a system has $n$ worker nodes each of which can store $1/k_A$ fraction of matrix $\bfA$ for conducting private matrix-vector multiplication $\mathbf{A}^T \bfx$. If we assign the jobs according to Alg. \ref{Alg:sec_matvec} to achieve our desired level of protection against information leakage of $\bfA$, we achieve resilience to $s = n - (k_A+1)$ stragglers.    
\end{corollary}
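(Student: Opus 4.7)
The plan is to reduce the decoding argument to the one already established in Theorem \ref{thm:matvec}, treating the sparse mask $\bfS^T\bfx$ as one additional unknown on top of the $k_A$ original unknowns $\bfA_0^T\bfx,\dots,\bfA_{k_A-1}^T\bfx$. Since worker $W_i$ now returns
$\bar{\bfA}_i^T\bfx=\tilde{\bfA}_i^T\bfx+r_i\,\bfS^T\bfx$,
its response is a linear equation in these $k_A+1$ unknowns. Therefore it suffices to show that for any choice of $k_A+1$ responding workers, the corresponding $(k_A+1)\times(k_A+1)$ decoding matrix is full rank with probability one over the random coefficients chosen in Alg.~\ref{Alg:New_matvec} and the entries of $\bfr$.

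Concretely, I would fix an arbitrary subset of $k_A+1$ workers and form the decoding matrix $\bfM$ whose first $k_A$ columns are inherited from the encoding of Alg.~\ref{Alg:New_matvec} (they encode which $\bfA_j^T\bfx$ participate in each worker's task, with the random continuous-distribution coefficients) and whose last column is $(r_{i_0},r_{i_1},\dots,r_{i_{k_A}})^T$, recording the contribution of $\bfS^T\bfx$. Expanding $\det(\bfM)$ along this last column gives
\begin{equation*}
\det(\bfM)=\sum_{\ell=0}^{k_A}(-1)^{\ell+k_A}\,r_{i_\ell}\,\det(\bfM_\ell),
\end{equation*}
where $\bfM_\ell$ is the $k_A\times k_A$ matrix obtained by deleting row $\ell$ and the last column. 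Crucially, each $\bfM_\ell$ is exactly the Alg.~\ref{Alg:New_matvec} decoding matrix associated with the $k_A$ workers left after removing the $\ell$-th chosen worker; by Theorem~\ref{thm:matvec}, each $\det(\bfM_\ell)$ is a nonzero polynomial in the random encoding coefficients.

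Consequently $\det(\bfM)$, viewed as a polynomial in the encoding coefficients together with $r_{i_0},\dots,r_{i_{k_A}}$, is not identically zero. An application of the Schwartz--Zippel lemma then yields $\det(\bfM)\neq 0$ with probability one, so the central node can recover all $k_A+1$ unknowns---in particular $\bfA_0^T\bfx,\dots,\bfA_{k_A-1}^T\bfx$, and hence $\bfA^T\bfx$---from any $k_A+1$ responses. This establishes resilience to $s=n-(k_A+1)$ stragglers. The only delicate step is justifying that \emph{every} cofactor $\det(\bfM_\ell)$ is a nonzero polynomial; this is precisely what Theorem~\ref{thm:matvec} guarantees, so no new combinatorial argument (Hall's theorem, cyclic-assignment counting) is needed---Alg.~\ref{Alg:sec_matvec} inherits it directly from the non-private scheme.
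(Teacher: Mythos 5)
Your argument is correct, but it takes a genuinely different route from the paper. The paper re-runs the Hall's-marriage-theorem machinery of Theorem \ref{thm:matvec}: it augments the bipartite graph with the extra vertex for $\bfS^T\bfx$, observes that this unknown is adjacent to every equation so every neighbourhood grows by one, deduces a perfect matching on $k_A+1$ vertices, and then substitutes a permutation matrix before invoking Schwartz--Zippel. You instead treat Theorem \ref{thm:matvec} as a black box and use a Laplace expansion of the $(k_A+1)\times(k_A+1)$ determinant along the $\bfS$-column: since the $r_{i_\ell}$ do not appear in any minor $\bfM_j$, the coefficient of $r_{i_0}$ in $\det(\bfM)$ is $\pm\det(\bfM_0)$, which is a not-identically-zero polynomial, so $\det(\bfM)\not\equiv 0$ and Schwartz--Zippel finishes. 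This is sound; one small presentational caveat is that what you actually need from Theorem \ref{thm:matvec} is the fact, established \emph{inside} its proof, that every $k_A$-subset decoding determinant is a nonzero polynomial in the encoding coefficients, rather than the theorem's probability-one statement itself --- worth saying explicitly. Your route is more modular (it works for any base scheme with that property and needs only one nonvanishing cofactor, avoiding any new combinatorics), and it makes transparent why the random scalars $r_i$ in Alg.~\ref{Alg:sec_matvec} matter: with a fixed all-ones last column the cofactors share variables and the alternating sum is not so obviously nonzero. The paper's route keeps the exposition uniform with Theorem \ref{thm:matvec} and would adapt more directly to settings with several noise unknowns (e.g., colluding workers), where the neighbourhood-counting viewpoint is the natural one.
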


\begin{proof}
We prove the corollary in a similar fashion as we have proved Theorem \ref{thm:matvec}. Instead of $k_A$ vector unknowns, $\bfA^T_0 \bfx, \bfA^T_1 \bfx, \bfA^T_2 \bfx, \dots, \bfA^T_{k_A - 1} \bfx$, to recover $\bfA^T \bfx$, we have one more unknown, $\bfS^T \bfx$ involved in this process. Similar to the proof of Theorem \ref{thm:matvec}, we denote the set of these $k_A+1$ unknowns as $\calB$, and choose an arbitrary set of $k_A+1$ worker nodes each of which corresponds to an equation in terms of $\omega_A+1$ of those $k_A+1$ unknowns. Denoting the set of $k_A+1$ equations as $\calC$, we can say,  $|\calB| = |\calC| = k_A+1$. 

We again consider a bipartite graph $\calG = \calC \cup \calB$, and claim that a perfect matching exists between the vertices in $\calC$ and $\calB$. The reason is that the new unknown $\bfS^T \bfx$ participates in every equation, hence, the size of the of neighborhood of $\bar{\calC} \in \calC$ will always increase by $1$ (as compared to Theorem \ref{thm:matvec}) when $|\bar{\calC}| = m \leq k_A$. Thus, for any $\bar{\calC}$, when $|\bar{\calC}| = m \leq k_A + 1$, the size of the neighborhood $|\calN(\bar{\calC})| \geq m$. This proves the perfect matching, and then, similar to the proof of Theorem \ref{thm:matvec}, using Schwartz-Zippel lemma \cite{schwartz1980fast}, we can prove the corollary.
\end{proof}

We consider a system of non-colluding worker nodes which are honest but curious. In this setting, in order to be private from an information-theoretic standpoint, the encoded matrices $\bar{\bfA}_i$ should not leak any information about the data matrix $\bfA$. In this regard, denote the mutual information of two random variables $X$ and $Y$ as $\calI(X, Y)$. A perfectly private scheme in our setting must satisfy the information-theoretic constraint, $\calI\left(\bar{\bfA}_i, \bfA \right) = 0$, for $i = 0, 1, \dots, n -1$. Denoting $\calH (X, Y)$  as the joint entropy of two random variables $X$ and $Y$, for our scheme we can write
\begin{align*}
    & \calI\left(\bar{\bfA}_i; \bfA \right)  =  \calI \left( \tilde{\bfA}_i + \bfS; \bfA \right) \nonumber \\ 
    = & \calH\left( \tilde{\bfA}_i + \bfS\right) - \calH\left( \tilde{\bfA}_i + \bfS | \bfA \right)  = \calH\left( \tilde{\bfA}_i + \bfS\right) - \calH\left( \bfS | \bfA \right)
\end{align*} Now, for small $\eta$, the number of non-zero entries in any $\tilde{\bfA}_i$ is approximately $\omega_A \eta \times \frac{rt}{k_A}$. Thus, we have
\begin{align}
\label{eq:sec}
    & \; \calI\left(\bar{\bfA}_i; \bfA \right) \nonumber \\
 \approx \; & \;{  \left( \omega_A\eta \; + \mu \; - \omega_A\eta \mu \;\right) \frac{rt}{k_A} \; log |\mathbb{F}|} - {\mu \; \frac{rt}{k_A} \; log |\mathbb{F}|} \nonumber \\
 = \; & \; \omega_A\eta \left( 1 - \mu \right)\; \frac{rt}{k_A} \; log |\mathbb{F}|
\end{align}
Thus, $\calI\left(\bar{\bfA}_i, \bfA \right)$ decreases with the increase of $\mu$; if the central node uses a denser $\bfS$, the system will have more protection, at the expense of longer computation times due to sparsity being destroyed. The system will be fully protected if $\mu = 1$, in other words, when $\bfS$ is fully dense. 

\begin{remark}
A recent work \cite{xhemrishi2022distributed} also studied this privacy issue in sparse matrix computations for a different setting of distributed computation. In that setting, the worker nodes are partitioned into two non-communicating clusters, the untrusted cluster and the partly trusted cluster, and different number of tasks are assigned to different nodes. This objective is different than our focus on being resilient to the maximum number of stragglers.
\end{remark}

\section{Numerical Experiments}
\label{sec:numexp}

In this section, we evaluate the effectiveness of our proposed approach by conducting numerical experiments and comparing its performance with various competing methods \cite{yu2017polynomial, 8849468, 8919859, das2020coded, dasunifiedtreatment, das2023jsait_submitted}. Note that there are several other works specifically developed for sparse matrix computations. Among them, the approach in \cite{wang2018coded} does not provide resilience to maximum number of stragglers for given storage constraints. The approach in \cite{xhemrishi2022distributed} partitions the worker nodes into untrusted and partly trusted cluster, which is not aligned to our assumption. The approach in \cite{ji2022sparse} assigns some jobs to the central node to reduce the probability of rank-deficiency in the decoding, which is also not in line of our assumptions. So, in the numerical experiment section, we do not consider these approaches.

We explore two different distributed systems: the first one consists of $n = 30$ worker nodes with $s = 5$ stragglers and the other consists of $n = 36$ nodes with $s = 8$ stragglers. We focus on a sparse input matrix $\bfA$ sized $40,000 \times 31,500$ and a dense vector $\bfx$ of length $40,000$. We consider two distinct scenarios in which the sparsity of $\bfA$ is $98\%$, and $99\%$, respectively. This implies that randomly selected $98\%$ and $99\%$ entries, respectively, in the matrix $\bfA$ are zero. It is worth noting that there exist numerous practical instances where data matrices demonstrate such (or, even more) levels of sparsity (refer to \cite{sparsematrices} for specific examples). The experiments are carried out on an AWS (Amazon Web Services) cluster, utilizing a {\tt c5.18xlarge} machine as the central node and {\tt t2.small} machines as the worker nodes.

\vspace{0.05 in}

{\bf Worker computation time:} Table \ref{table:worker_comp} presents a comparison among different methods based on the computation time required by worker nodes to complete their respective tasks. In these scenarios, where $k_A = 25$ or $28$, the approaches described in \cite{yu2017polynomial, 8849468, 8919859} allocate linear combinations of $k_A$ submatrices to the worker nodes. Consequently, the original sparsity of matrix $\bfA$ is lost within the encoded submatrices. As a result, the worker nodes experience a significantly increased processing time for their tasks compared to our proposed approach or the methods outlined in \cite{das2020coded, dasunifiedtreatment, das2023jsait_submitted}, which are specifically designed for sparse matrices and involve smaller weights.

To discuss the effectiveness of our approach in more details, we compare the weight of the coding of our approach against the approach in \cite{das2023jsait_submitted}. In the first scenario, when $n = 30$ and $s = 5$, our approach sets the weight $\Bigl\lceil{\frac{(n-s)(s+1)}{n}}\Bigr\rceil = \Bigl\lceil{\frac{25 \times 6}{30}}\Bigr\rceil = 5$, whereas the approach in \cite{das2023jsait_submitted} uses a weight $\min(s+1, k_A) = \min(6, 25) = 6$. Thus, our approach involves around $17\%$ less computational complexity per worker node, which is supported by the results in Table \ref{table:worker_comp}. Similarly, when $n = 36$ and $s = 8$, our proposed approach involves a weight $\Bigl\lceil{\frac{28 \times 9}{36}}\Bigr\rceil = 7$, which is smaller than the corresponding weight, $s + 1 = 9$, used by the approach in \cite{das2023jsait_submitted}.

\begin{table*}[t]
\caption{{\small Comparison of worker computation time and communication delay (matrix transmission time) for matrix-vector multiplication for $n = 30, s = 5$, and $n = 36, s = 8$, when randomly chosen $98\%$ and $99\%$ entries of matrix $\bfA$ are zero.}}
\vspace{-0.1in}
\label{table:worker_comp}
\begin{center}
\begin{small}
\begin{sc}
\begin{tabular}{c c c c c c c c c c c c c}
\hline
\toprule
\multirow{3}{*}{Methods} & & \multicolumn{5}{c}{$n = 30$ and $s = 5$} & &  \multicolumn{5}{c}{$n = 36$ and $s = 8$} \\ \cline{3-7} \cline{9-13}
 & & \multicolumn{2}{c}{Comp. Time (in ms)} & &  \multicolumn{2}{c}{Comm. Delay (in s)} & & \multicolumn{2}{c}{Comp. Time (in ms)} & & \multicolumn{2}{c}{Comm. Delay (in s)}\\ \cline{3-4} \cline{6-7} \cline{9-10} \cline{12-13}
& & $99\%$ &  $98\%$ & & $99\%$ & $ 98\%$ & & $99\%$ &  $ 98\%$ &  & $ 99\%$ &  $ 98\%$   \\
 \midrule
Poly. Code  \cite{yu2017polynomial} & & $61.4$ & $62.3$ & & $0.67$ & $1.14$ & & $55.7$ &  $56.3$ & & $0.52$ & $0.95$  \\
Ortho Poly  \cite{8849468}   & & $62.2$ & $61.7$ & & $0.69$ & $1.17$ & & $56.2$ &  $56.4$ & & $0.49$ & $0.91$  \\
RKRP Code \cite{8919859} & & $60.3$ & $61.1$ & & $0.65$ & $1.11$ & & $56.8$ &  $57.4$ & & $0.51$ & $0.93$  \\
SCS Opt. Sch. \cite{das2020coded} & & $24.1$ & $38.3$ & & $0.24$ & $0.37$ & & $28.1$ & $41.3$ & & $0.28$ & $0.42$  \\
Class-based \cite{dasunifiedtreatment} & & $17.3$ & $28.2$ & & $0.20$ & $0.31$ & & $22.1$ & $33.7$ & & $0.24$ & $0.35$  \\
Cyclic Code \cite{das2023jsait_submitted} & & $19.5$ & $33.4$ & & $0.23$ & $0.35$ & & $26.7$ &  $37.6$ & & $0.27$ & $0.39$  \\
{\bf Proposed Scheme} & & $\mathbf{16.7}$ & $\mathbf{27.7}$ & & $\mathbf{0.19}$ & $\mathbf{0.32}$ & & $\mathbf{21.8}$ &  $\mathbf{33.9}$ & & $\mathbf{0.24}$ & $\mathbf{0.34}$  \\
\bottomrule
\end{tabular}
\end{sc}
\end{small}
\end{center}
\vspace{-0.15in}
\end{table*}%

\vspace{0.04 in}

{\bf Communication delay:} Table \ref{table:worker_comp} also illustrates the delay incurred during the transmission of encoded submatrices from the central node to the worker node. The approaches presented in \cite{yu2017polynomial}, \cite{8849468}, and \cite{8919859} employ dense linear combinations of submatrices, resulting in a significant increase in the number of non-zero entries within the encoded submatrices. Consequently, transmitting these large number of non-zero entries leads to a substantial communication delay within the system. In contrast, our proposed scheme mitigates this issue by utilizing encoded submatrices formed through linear combinations of only a limited number of uncoded submatrices which significantly reduces the corresponding communication delay.

For example, consider the scenario when $n = 36, s = 8$ and $\bfA$ is $99\%$ sparse. In this scenario, the approach in \cite{yu2017polynomial} needs to transmit up to $0.01 \times 28 \times \frac{40,000 \times 31,500}{28} = 1.26 \times 10^7$ number of non-zero entries to each node. The corresponding number for the approach in \cite{das2023jsait_submitted, das2023distributedisit} is $0.01 \times (s+1) \times \frac{40,000 \times 31,500}{28} = 4.05 \times 10^6$. On the other hand, the corresponding number for our proposed method is $0.01 \times \Bigl\lceil{\frac{k_A(s+1)}{n}}\Bigr\rceil \times \frac{40,000 \times 31,500}{28} = 3.15 \times 10^6$, which is smaller than the previous ones, and clarifies the reduction of communication delay as mentioned in Table \ref{table:worker_comp}.

\vspace{0.04 in}

{\bf Numerical stability:} 
Next, we assess the numerical stability of distributed systems using different coded matrix computation techniques. We examine the condition numbers of the decoding matrices for various combinations of $n$ workers and $s$ stragglers. By comparing the worst-case condition number ($\kappa_{worst}$) across different methods, we present the $\kappa_{worst}$ values in Table \ref{table:kappa}. The polynomial code approach \cite{yu2017polynomial} involves ill-conditioned Vandermonde matrices and demonstrates significant numerical instability, as evidenced by its notably high value of $\kappa_{worst}$. Our proposed approach, among the numerically stable methods, exhibits smaller $\kappa_{worst}$ value compared to the method in \cite{8849468} where the condition numbers increases exponentially in terms of $s = n - k_A$. Note that the approach in \cite{8919859} provides slightly smaller $\kappa_{worst}$ value than ours; however, as mentioned in Table \ref{table:worker_comp}, the worker computation time and the communication delay are significantly higher in that case, since they assign dense linear combinations to the worker nodes. 

\begin{table}[t]
\caption{\small Comparison among different approaches in terms of worst case condition number $\left(\kappa_{worst} \right)$ 
and the corresponding required time for $10$ trials to find a good set of random coefficients}
\vspace{-0.1 in}
\label{table:kappa}
\begin{center}
\begin{small}
\begin{sc}
\begin{tabular}{c c c}
\hline
\toprule
\multirow{2}{*}{Methods}  & $\kappa_{worst}$ for  & Req. time for \;\\
  & $n = 30$, $s = 5$ & $10$ trials\, (in s) \\

 \midrule

\; \; Poly. Code  \cite{yu2017polynomial}   & $1.47 \times 10^{13}$ & $0$ \\
\; \; Ortho-Poly\cite{8849468}    & $1.40 \times 10^8$ & $0$ \\
\; \; RKRP Code\cite{8919859}    & $1.76 \times 10^6$ & $81.84$ \\
\; \; SCS Opt. Sch. \cite{das2020coded}   & $4.68 \times 10^7$ & $1138.6$ \\
\; \; Class based \cite{dasunifiedtreatment} & $7.16\times 10^6$ & $1479.3$ \\
\; \; Cyclic Code  \cite{das2023jsait_submitted} & $1.06 \times 10^7$ & $78.38$ \\
\; \; {\textbf{Prop. Scheme}}  & $\mathbf{{8.21 \times 10^6}}$ & $\mathbf{{77.41}}$ \\
\bottomrule
\end{tabular}
\end{sc}
\end{small}
\end{center}
\vspace{-0.2 in}
\end{table}%

\vspace{0.02 in}

{\bf Coefficient determination time:} 
Next, Table \ref{table:kappa} shows a comparative analysis of various methods with respect to the time required for performing 20 trials to obtain a ``good'' set of random coefficients that ensures numerical stability of the system. As explained in Section \ref{sec:trialtime}, the techniques proposed in \cite{das2020coded} and \cite{dasunifiedtreatment} involve partitioning matrix $\bfA$ into $\Delta_A = \textrm{LCM}(n, k_A)$ block-columns. For instance, when $n = 30$ and $s = 5$, $\Delta_A = 150$ is significantly larger than $k_A = 25$, which denotes the partition level in our approach. Consequently, when dealing with higher-sized matrices to determine the condition number, the methods proposed in \cite{das2020coded} and \cite{dasunifiedtreatment} necessitate considerably more time compared to our approach.


\vspace{0.02 in}

{\bf Trade-off between privacy and worker computation time:} 
Next, we compare the trade-off between protection against information leakage and the worker node computation time. Consider a $99\%$ sparse matrix $\bfA$ of size $40,000 \times 31,500$, i.e., $99\%$ entries of $\bfA$ are zero. We assume the nodes to be honest but curious. Now, according to the discussion in Sec. \ref{sec:private}, we add matrix $\bfS$ to the encoded submatrices of $\bfA$. 
Fig. \ref{privacy_plot} shows the trade-off between the privacy (in terms of $\mu$) and the worker computation time for two different scenarios of $n$ and $s$. The extreme case $\mu = 0$ indicates that the worker node receives only the coded submatrices as outlined by Alg. \ref{Alg:New_matvec}, and in that case, the computation speed is very high. On the other extreme, as clarified in \eqref{eq:sec}, when $\mu = 1$, i.e., dense noise is added to the assigned submatrices, then $\calI\left(\bar{\bfA}_i; \bfA \right) = 0$, which indicates the full protection against information leakage from the honest but curious worker nodes. However, that comes with a sacrifice in the worker node computation speed. In this experiment, we see that the worker computation time is most sensitive at small values of $\mu$, i.e., when less than 20\% non-zero entries are being added. After this point, privacy can be improved with little downside to computational time. Note that the approaches in \cite{das2020coded, dasunifiedtreatment}, while being specifically suited to sparse matrices, do not address the privacy issue.
 
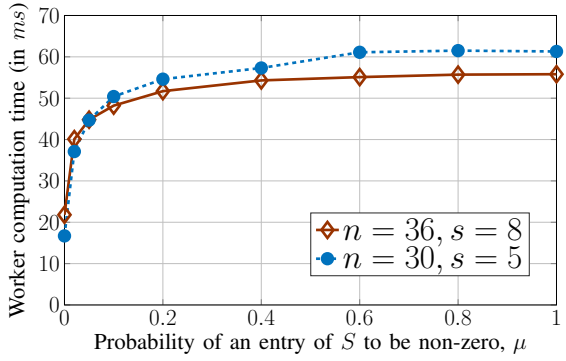
\begin{figure}[t]

\centering
\resizebox{0.85\linewidth}{!}{

\definecolor{mycolor6}{rgb}{0.92941,0.69412,0.12549}%
\definecolor{mycolor7}{rgb}{0.74902,0.00000,0.74902}%
\definecolor{mycolor8}{rgb}{0.60000,0.20000,0.00000}%

\begin{tikzpicture}
\begin{axis}[%
width=5.1in,
height=3.003in,
at={(2.6in,0.85in)},
scale only axis,
xmin=0,
xmax=1,
xlabel style={font=\color{white!15!black}, font=\LARGE},
xlabel={Probability of an entry of $S$ to be non-zero, $\mu$},
ymin=0,
ymax=70,
xtick={0, 0.2, 0.4, 0.6, 0.8, 1},
ytick={0,10,20,30,40,50,60,70},
tick label style={font=\LARGE} ,
ylabel style={font=\color{white!15!black}, font=\LARGE},
ylabel={Worker computation time (in $ms$)},
axis background/.style={fill=white},
xmajorgrids={true},
ymajorgrids={true},
legend style={at={(0.50,0.10)}, nodes={scale=1}, anchor=south west, legend cell align=left, align=left, draw=white!15!black,font = \Huge}
]

\addplot [solid, color=mycolor8, line width=2.0pt, mark=diamond, mark options={solid, mycolor8, scale = 3}]
  table[row sep=crcr]{%
0	21.8\\
0.02	40.1\\
0.05	44.8\\
0.1   48.2\\
0.2	51.7\\
0.4	54.3\\
0.6	55.1\\
0.8	55.7\\
1	55.8\\
};
\addlegendentry{$n = 36, s = 8$}

\addplot [dashed, color=mycolor1, line width=2.0pt, mark=*, mark options={solid, mycolor1, scale = 2}]
  table[row sep=crcr]{%
0	16.7\\
0.02	37.1\\
0.05	44.7\\
0.1	50.4\\
0.2	54.6\\
0.4	57.3\\
0.6	61.1\\
0.8	61.5\\
1	61.3\\
};
\addlegendentry{$n = 30, s = 5$}

\end{axis}
\end{tikzpicture}%
}
\caption{\small Trade-off between the protection against information leakage and the worker computation time. A larger $\mu$ enhances the protection, but reduces the computation speed.}
\vspace{-0.2 in}
\label{privacy_plot}
\end{figure}

\section{Conclusion}
\label{sec:conclusion}

In this study, we devised a distributed scheme for multiplying large matrices by vectors, specifically designed for sparse input matrices. First we found a lower bound on the weight for the encoding of any scheme for the resilience to the maximum number of stragglers for given storage constraints. Our proposed straggler-optimal approach meets the lower bound and maintains the inherent sparsity of the input matrix $\bfA$ up to a certain extent. As a result, it substantially reduces both computation and communication delays compared to dense coded methods. We also explored the privacy aspect of sparse matrix computations when the nodes are honest but curious. We achieved a controllable balance between the preserved sparsity level and information leakage. Our claims were corroborated through numerical experiments conducted on an AWS cluster.

A future direction can include developing schemes for sparse distributed matrix-matrix multiplication which meets the lower bound on the weight. Another direction may include developing sparsely coded schemes with protection against information leakage when the worker nodes can collude among them.

\appendix
\subsection{Proof of Claim \ref{clm:m1gw1}}
\label{app:proofclaim1}
\begin{proof}
Consider the worker nodes in $\calW_1$. According to Alg. \ref{Alg:New_matvec}, we assign a linear combination of $\bfA_{i\omega_A}, \bfA_{i\omega_A + 1}, \bfA_{i\omega_A + 2}, \dots, \bfA_{(i+1)\omega_A - 1} \, \left(\textrm{indices modulo} \, k_A \right)$ to worker node $W_{i}$, for $i = k_A, k_A + 1, \dots, n-1$. Thus, the participating submatrices in worker node $W_{k_A}$ are $\bfA_0, \bfA_1, \dots, \bfA_{\omega_A - 1} \left(\textrm{indices reduced modulo} \, k_A \right)$. Similarly, the participating submatrices in $W_{k_A + 1}$ are $\bfA_{\omega_A}, \bfA_{\omega_A+1}, \dots, \bfA_{2\omega_A - 1} \left(\textrm{indices reduced modulo} \, k_A \right)$. In a consequence, $\omega_A$ number of submatrices participate in each of those $s$ worker nodes sequentially in an increasing order in terms of their indices (reduced modulo $k_A$). 

Now, denote the number of appearances of any submatrix $\bfA_i$ within the nodes in $\calW_1$ by $\bfv_i \geq 0$. Thus, for any $0 \leq j, k \leq k_A - 1$, we have $ |\bfv_j  -  \bfv_k| \leq 1$, where $\sum_{i = 0}^{k_A - 1} \bfv_i = s \omega_A$. Thus, the average of these $\bfv_i$'s is $\rho = \frac{s \omega_A}{k_A}$. If $\rho$ is an integer, then $\bfv_i = \floor{\rho} = \rho$ for $i = 0, 1, 2, \dots, k_A - 1$, since for every pair of $j,k$, we have $|\bfv_j  -  \bfv_k| \leq 1$. Similarly, if $\rho$ is not an integer, then $\bfv_i \geq \floor{\rho}$. Thus, within all $s$ nodes of $\calW_1$, every submatrix participates in at least $\floor{\rho}$ times over $\floor{\rho}$ distinct nodes. In other words, any submatrix may not participate in at most $s - \floor{\rho}$ nodes within the nodes of $\calW_1$.

First, consider the case, $k_A = s$. Here, every submatrix participates in $\floor{\rho} = \omega_A$ nodes, therefore, any submatrix does not participate in $s - \omega_A$ nodes. But, we choose any $m_1 \geq \omega_A$ nodes in $\calW_1$, where $\omega_A = \ceil{\frac{s+1}{2}}$, since $k_A = s$. Thus,
\begin{align*}
   2 \omega_A \geq s + 1 > s \;\; \textrm{which indicates that}, \;\; \omega_A > s - \omega_A.
\end{align*}In addition, since $m_1 \geq \omega_A$, we claim that $m_1 > s - \omega_A$. Thus, every submatrix will participate at least once within those chosen $m_1$ nodes, hence $|\calA_1| = k_A$. 

Next, consider the other case when  $k_A > s$. Again, since we choose any arbitrary $m_1 \geq \omega_A$ nodes in $\calW_1$, we are leaving $s - m_1$ nodes in $\calW_1$. But 
\begin{align*}
    s - m_1 \leq s - \omega_A < s - \floor{\rho}.
\end{align*} The second inequality holds since $s < k_A$. Thus, every submatrix will participate at least once within those $m_1 \geq \omega_A$ nodes, hence $|\calA_1| = k_A$.
\end{proof}

%
%


\ifCLASSOPTIONcaptionsoff
  \newpage
\fi

\bibliographystyle{IEEEtran}
\bibliography{citations}
\end{document}